\newcommand{\exc}[1]{\mathit{excess}(#1)}
\newcommand{\fin}[1]{\mathit{in}(#1)}
\newcommand{\M}{\mathcal{M}}
\newtheorem{theorem}{Theorem}[section]
\newtheorem{lemma}[theorem]{Lemma}
\newtheorem{invariant}[theorem]{Invariant}
\newlist{lines}{enumerate}{3}
\setlist[lines]{nolistsep,label*=\arabic*.}
\title{Multiple-source multiple-sink maximum flow in planar graphs}
\author{Yahav Nussbaum\thanks{The Blavatnik School of Computer Science, Tel Aviv University, 69978 Tel Aviv, Israel.\newline{\small \texttt{yahav.nussbaum@cs.tau.ac.il}}}}
\date{}
\begin{document}

\maketitle

\begin{abstract}
    In this paper we show an $O(n^{3/2} \log^2 n)$ time algorithm for finding a maximum flow in a planar graph with multiple sources and multiple sinks. This is the fastest algorithm whose running time depends only on the number of vertices in the graph. For general (non-planar) graphs the multiple-source multiple-sink version of the maximum flow problem is as difficult as the standard single-source single-sink version. However, the standard reduction does not preserve the planarity of the graph, and it is not known how to generalize existing maximum flow algorithms for planar graphs to the multiple-source multiple-sink maximum flow problem.
\end{abstract}

\section{Introduction}

In the standard \emph{maximum flow} problem, we wish to send the maximum possible amount of flow from a specific vertex designated as a \emph{source} to another vertex designated as a \emph{sink}, this is the \emph{single-source single-sink} maximum flow problem. In a more general formulation of the problem we have many sources and many sinks, this is the \emph{multiple-source multiple-sink} maximum flow problem. Ford and Fulkerson \cite{FF62} suggested a simple reduction from the multiple-source multiple-sink maximum flow problem to the standard single-source single-sink problem. We add to the graph a new vertex $s'$ (called \emph{super-source}) and connect it to all the sources in the graph with arcs of infinite capacity. Similarly we add a new vertex $t'$ (\emph{super-sink}) and connect all the sinks in the graph to it with arcs of infinite capacity. Then we solve a single-source single-sink maximum flow problem with $s'$ as the source and $t'$ as the sink.

In this work, we are interested in flow networks whose underlying graph is \emph{planar}. The fastest known algorithm for the single-source single-sink maximum flow in planar graphs is the algorithm of Borradaile and Klein \cite{BK09}, which runs in $O(n \log n)$ time, where $n$ is the number of vertices in the graph. This algorithm was simplified by Schmidt et al.~\cite{STC09} and by Erickson \cite{E10}. See \cite{BK09} for a survey on previous results for the problem. However, it is not known how to generalize existing planar maximum flow algorithms to the multiple-source multiple-sink problem. The Ford and Fulkerson standard reduction does not preserve the planarity of the graph. Consider a planar graph in which four sources form a clique in the underlying undirected graph. When we add the super-sink $s'$ to the graph we get a complete graph of 5 vertices in the underlying undirected graph, which is not planar. This standard reduction, using an algorithm for maximum flow in general graphs such as the algorithm of Goldberg and Tarjan \cite{GT88}, gives us an $O(n^2 \log n)$ time algorithm for our problem.

In addition to uses in standard applications of planar graph, such as transportation networks, the multiple-source multiple-sink maximum flow problem in planar graphs has many applications of energy functions minimization in computer vision. The maximum flow problem gives us a solution to the related \emph{minimum $s-t$ cut problem} \cite{FF62}. Greig et al.\ \cite{GPS89} were the first to use minimum $s-t$ cut algorithms for such a minimization problem of an energy function, for binary image restoration. Since then, many other similar uses have emerged in computer vision, such as image segmentation, multi-camera stereo vision, object recognition, and others, see \cite{BK04} for a survey. From a graph point of view, the graphs for these applications are all similar. For every pixel in the image, the graph contains a vertex, such that vertices of two neighboring pixels are connected by an edge. In addition, we add a source $s$ and a sink $t$ and connect them to the vertices of the pixels. The pixels define a planar graph, in fact a grid, but the source and the sink destroy the planarity of the graph. However, by breaking $s$ into multiple sources and $t$ into multiple sinks, we can reduce the graph to a multiple-source multiple-sink planar graph. 

Miller and Naor \cite{MN95} studied the problem of maximum flow with multiple sources and multiple sinks in planar graphs. When the \emph{demand} (the required difference between the incoming and outgoing amounts of flow) at each vertex is known, they showed how to reduce this problem to a \emph{circulation} problem, which can be solved in planar graphs in $O(n \log ^2 n / \log \log n)$ time \cite{MW10}. Miller and Naor also showed an $O(n \log^{3/2} n)$ time algorithm for the case where all the sinks and the sources are on the boundary of a single face, and generalized it to an $O(k^2 n^{3/2} \log^2 n)$ time algorithm for the case where the sources and the sinks reside on the boundaries of $k$ different faces. The time bound of the first algorithm can be improved to $O(n \log n)$ using the linear time shortest path algorithm of Henzinger et al.~\cite{HKRS97}, and the time bound of the second algorithm can be improved to $O(k^2 n \log^2 n)$ using the $O(n \log n)$ time single-source single-sink maximum flow algorithm of Borradaile and Klein \cite{BK09}. An additional approach that Miller and Naor show iterate over all the sources and all the sinks, and solve the maximum flow problem from every source to every sink, each time taking the residual network of the previous flow as an input. This gives an $O(|S||T|n \log n)$ time bound, where $|S|$ is the number of sources and $|T|$ is the number of sinks, using the algorithm of Borradaile and Klein \cite{BK09}.

Recently, Borradaile and Wulff-Nilsen \cite{BW10} and Klein and Mozes \cite{KM10} showed two algorithms for maximum flow in planar graphs with multiple sources and a single sink (or equivalently, a single source and multiple sinks) in $O(n^{3/2} \log n)$ time. This leads to an $O(\min\{|S|,|T|\} n^{3/2} \log n)$ time algorithm for our problem.

In this paper, we show an $O(n^{3/2} \log^2 n)$ time algorithm for the multiple-source multiple-sink maximum flow problem in planar graphs. This is the fastest algorithm for the problem whose running time depends only on the value of $n$, and not on the set of sources, the set of sinks, or the capacities of the arcs.

Both algorithms for the multiple-source single-sink version of the problem \cite{BW10,KM10} use \emph{preflows} to find the required flow, we use the similar concept of \emph{pseudoflows} for our algorithm. Pseudoflows were originally used for the minimum cost flow problem \cite{GT90}, Hochbaum \cite{H08} was the first to use pseudoflows for the maximum flow problem. As the two algorithms of \cite{BW10,KM10}, we also decompose the problem into smaller subproblems using \emph{cycle separators} \cite{M86}. Cycle separators were first used for maximum flow in planar graphs by Johnson and Venkatesan \cite{JV83}. We use an approach similar to the one of Miller and Naor \cite{MN95} in our algorithm, initially our flow function $f$ is all zero, then we repeatedly solve some maximum flow problem in the residual network with respect to $f$ and add to $f$ the resulting flow.

\section{Preliminaries}

We consider a simple directed planar graph  $G = (V, E)$, where $V$ is the set of vertices and $E$ is the set of arcs.  We denote the number of vertices by $n$, since  the graph is planar we have $|E| = O(n)$. For $v \in V$, $\fin{v} = \{(u, v) \mid (u, v) \in E\}$ is the set of incoming arcs. We assume that $G$ is given with a fixed planar embedding, in other words it is a \emph{plane graph}. Combinatorial representation of such an embedding can be found in $O(n)$ time \cite{HT76}. We assume that the graph is connected, as we can process every connected component separately.

A \emph{flow network} consists of a graph $G$, a set $S$ of vertices designated as \emph{sources}, a set $T$ of vertices designated as \emph{sinks}, and a \emph{capacity function} that assigns to every arc $e \in E$ a finite capacity $c(e) \geq 0$. A function $f: E \to R$ is a \emph{flow function} if and only if it satisfies the following three constraints:
\begin{align}
    f(e) \leq c(e) \quad &\forall e \in E \enspace & \mbox{(capacity constraint),} \\
    f((u, v)) = -f((v,u)) \quad &\forall (u, v) \in E & \mbox{(antisymmetry constraint),}\\
    \sum_{e \in \fin{v}} f(e) = 0 \quad &\forall v \in V \setminus S \cup T \enspace & \mbox{(flow conservation constraint).}
\end{align}
For the antisymmetry constraint we assume that for every arc $(u, v)$ in $E$, the arc $(v, u)$ is also in $E$, if this is not the case to begin with, then we add the arc with capacity $0$. In the planar embedding of $G$ we identify the arcs $(u, v)$ and $(v, u)$, that is, we embed them as a single \emph{edge}. In other words, an edge is a pair of antiparallel arcs.

The \emph{value} of a flow $f$ is $\sum_{e \in \fin{t},t \in T} f(e)$, the amount of flow that enters the sinks. A \emph{maximum flow} is a flow of maximum value.

Goldberg and Trajan \cite{GT90} define \emph{pseudoflow} by removing the flow conservation constraint, in other words, a pseudoflow is a function $f: E \to R$ that satisfies the capacity constraint and the antisymmetry constraint. The \emph{flow excess} of a vertex $v$ is $\exc{v} = \sum_{e \in \fin{v})} f(e)$. We can view any pseudoflow $f$ as a flow function, if we choose the right sets of sources and sinks. Indeed, if we replace $S$ with $S' = \{v \mid \exc{v} > 0\}$ and $T$ with $T' = \{v \mid \exc{v} < 0\}$ then $f$ is a flow from $S'$ to $T'$ with respect to the graph $G$ and the capacity function $c$.

The \emph{residual capacity} of an arc $e$ with respect to a pseudoflow $f$ is $r_f(e) = c(e) - f(e)$. An arc $e$ is \emph{residual} if $r_f(e) > 0$. A \emph{residual path} is a path of residual arcs. A flow is maximum if and only if it has no residual path from $S$ to $T$ \cite{FF62}. The \emph{residual network} with respect to a pseudoflow $f$ is the flow network $G_f$ on the graph $G$ with sources $S$, sinks $T$, and capacity function $r_f$.

\subsection{Cycle separators}

Assume we assign weights that sum to $1$ to the vertices of $G$. Let $C$ be a cycle in the plane (not a cycle of $G$) that meets $G$ only in its vertices, and let $P$ be the set of vertices in which $G$ and $C$ meet. The cycle $C$ separates $G$ into two edge-disjoint \emph{pieces} -- the subgraph of $G$ inside $C$ including the vertices of $P$, and the subgraph of $G$ outside $C$ including the vertices of $P$. If the total weight of vertices strictly inside $C$, and the total weight of vertices strictly outside $C$ (we do not sum the vertices of $P$ in the total) are both at most $2/3$, then the set of vertices $P$ is a \emph{cycle separator}.

Miller \cite{M86} showed that for any triangulated planar graph there is a cycle separator of $O(\sqrt{n})$ vertices\footnote{Miller \cite{M86} considers graphs that are not 2-connected as a special case which might have a single vertex separator instead of a cycle separator, for our purpose it does not matter.}. Furthermore, such a separator can be found in $O(n)$ time. We can assume without loss of generality that $G$ is triangulated, since otherwise we can triangulate it in linear time with arcs of capacity $0$.

For convenience, we use the term \emph{separator} to describe a set of vertices that separates between two pieces even if the pieces do not satisfy the weight constraint.

\section{Flow summation} \label{sec:sum}

A key technique in our algorithm is \emph{flow summation}, defined as follows. Let $f$ be a pseudoflow with respect to some capacity function $c$, and let $f'$ be a pseudoflow with respect to the residual capacity $r_f$. We let $(f + f')(e) = f(e) + f'(e)$. The function $f + f'$ is a pseudoflow satisfying the capacity constraint with respect to $c$. Furthermore, if $f$ and $f'$ are both flow functions (satisfying flow conservation) from the same set of sources to the same set of sinks, then $f + f'$ is also a flow function from this set of sources to the set of sinks.

We use two algorithms of Miller and Naor \cite{MN95} to obtain a maximum flow through flow summation. In the first algorithm, we apply a single-source single-sink maximum flow algorithm for each pair of source and sink:

\begin{algorithm}
  \begin{lines}
    \item Set $f(e) = 0$ for every arc $e$.
    \item For every $s \in S$ do:
    \begin{lines}
        \item For every $t \in T$ do:
      \begin{lines}
        \item Compute a maximum flow $f'$ from $s$ to $t$ in the residual network $G_f$.
        \item Let $f \leftarrow f + f'$.
      \end {lines}
    \end {lines}
    \item Return $f$.
  \end{lines}
  \caption{\@} \label{alg:iterate}
\end{algorithm}

Algorithm \ref{alg:iterate} is described in Section 7 of \cite{MN95} and also in Theorem 4.1 of \cite{BW10}. The correctness of this algorithm implies the following lemma:

\begin{lemma} \label{lem:subset}
    For a subset of the sources $S' \subseteq S$, let $f$ be a maximum flow from $S'$ to $T$, and let $f'$ be a maximum flow from $S \setminus S'$ to $T$ in $G_f$. Then, $f + f'$ is a maximum flow from $S$ to $T$. Symmetrically, for a subset of the sinks $T' \subseteq T$ with a maximum flow $f$ from $S$ to $T'$ and a maximum flow $f'$ in $G_f$ from $S$ to $T \setminus T'$, the sum $f + f'$ is a maximum flow from $S$ to $T$.
\end{lemma}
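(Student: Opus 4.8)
The plan is to reduce the claim to an equality of flow values and then invoke the correctness of Algorithm~\ref{alg:iterate}. First I would check that $f+f'$ is genuinely a flow from $S$ to $T$: for any vertex $v \notin S \cup T$ we have $v \notin S' \cup T$ and $v \notin (S\setminus S') \cup T$, so the flow conservation constraints for $f$ and for $f'$ both hold at $v$ and add, which is exactly the flow-summation observation of Section~\ref{sec:sum}. Since the value of a flow is the amount entering $T$ and this quantity is linear in the flow function, $f+f'$ is a flow from $S$ to $T$ of value $\mathrm{value}(f)+\mathrm{value}(f')$. Because any flow from $S$ to $T$ has value at most that of a maximum flow, it then suffices to prove the reverse inequality $\mathrm{value}(f)+\mathrm{value}(f') \ge \mathrm{maxval}(S,T)$, where $\mathrm{maxval}(S,T)$ denotes the value of a maximum flow from $S$ to $T$; combined with the trivial upper bound, equality makes $f+f'$ maximum.

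For the lower bound I would use Algorithm~\ref{alg:iterate}. Its correctness is established through the single-source inductive step: if $h$ is a maximum flow from a source set $A$ to $T$ and we add to it the maximum flow from one further source $s$ to $T$ computed in $G_h$ (this is what the inner loop over the sinks produces), the result is a maximum flow from $A \cup \{s\}$ to $T$. Starting from the given maximum flow $f$ from $S'$ to $T$ and applying this step once for each source of $S \setminus S'$ in turn, I obtain a flow $f+g$ that is a maximum flow from $S$ to $T$, where $g$ is some flow from $S\setminus S'$ to $T$ in $G_f$. Hence $\mathrm{value}(f)+\mathrm{value}(g)=\mathrm{maxval}(S,T)$. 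Now $f'$ is, by hypothesis, a \emph{maximum} flow from $S\setminus S'$ to $T$ in $G_f$, whereas $g$ is merely one such flow, so $\mathrm{value}(f') \ge \mathrm{value}(g)$; therefore $\mathrm{value}(f)+\mathrm{value}(f') \ge \mathrm{value}(f)+\mathrm{value}(g)=\mathrm{maxval}(S,T)$, which is the required inequality.

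The main obstacle is that the lemma is stated for an \emph{arbitrary} maximum flow $f'$ in $G_f$, not for the particular flow $g$ produced by the chain of augmentations, so a direct no-augmenting-path argument is awkward: adding $f'$ could in principle send flow backward across the cut that $f$ saturates between $S'$ and $T$ and reopen a residual $S'$--$T$ path. I would sidestep this entirely by working at the level of flow values rather than of individual arcs, exploiting that all maximum flows from $S\setminus S'$ to $T$ in $G_f$ share the same value, so that $f'$ is interchangeable with $g$ for the purpose of bounding $\mathrm{value}(f+f')$; a flow from $S$ to $T$ attaining the value $\mathrm{maxval}(S,T)$ is maximum by the no-residual-path criterion of Ford and Fulkerson recalled in the preliminaries. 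Finally, the symmetric statement for a subset $T'$ of the sinks follows by reversing every arc of $G$, which interchanges the roles of sources and sinks and converts it into the source version already proved.
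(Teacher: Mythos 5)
Your proof is correct, and it rests on the same pillar as the paper's own argument: the paper proves Lemma~\ref{lem:subset} in a single sentence, by asserting that the correctness of Algorithm~\ref{alg:iterate} (established in Section~7 of \cite{MN95} and Theorem~4.1 of \cite{BW10}) implies it. What you do differently is supply the bridge that this one-liner leaves implicit, and you identify exactly why it is needed: the correctness of Algorithm~\ref{alg:iterate} concerns the particular flows constructed source by source, whereas the lemma quantifies over an \emph{arbitrary} maximum flow $f'$ in $G_f$, which a direct residual-path argument does not handle. Your bridge is sound: chaining the single-source inductive step starting from $f$ produces some flow $g$ from $S\setminus S'$ to $T$ in $G_f$ with $f+g$ maximum from $S$ to $T$; then $\mathrm{value}(f')\ge\mathrm{value}(g)$ simply because $f'$ is maximum in $G_f$ and $g$ is feasible there (you do not actually need the remark that all maximum flows share a value), so $\mathrm{value}(f+f')\ge\mathrm{value}(f+g)$, and a flow from $S$ to $T$ attaining the maximum value is maximum by definition. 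Three small points deserve a line each in a polished write-up. First, your appeal must be to the \emph{inductive step} of the cited correctness proofs, valid for an arbitrary maximum flow from the already-processed sources --- the end-to-end statement ``Algorithm~\ref{alg:iterate} outputs a maximum flow'' alone would not let you start the chain at $f$; you say this, but it is the one place your argument could be attacked, and it is the same debt the paper itself incurs. Second, you should verify that $g=g_1+g_2+\cdots$ really is a flow in $G_f$ from $S\setminus S'$ to $T$: it respects $r_f$ because $f+g\le c$, and it conserves flow outside $(S\setminus S')\cup T$ because each $g_i$ conserves flow outside $\{s_i\}\cup T$. Third, the arc-reversal reduction for the sink version is fine under the paper's antisymmetry convention. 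Net effect: your route buys an actual derivation where the paper offers only a citation, at no loss of correctness.
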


The second algorithm of Miller and Naor \cite{MN95} that we use is a recursive algorithm. For $X \subseteq V$ we denote $X \cap S$ by $X_s$, and similarly $X_t$ is $X \cap T$. The algorithm divides the vertices of the graph that are sources or sinks into two sets, $L$ and $R$. After finding a maximum flow $f$ from $L_s$ to $R_t$, the algorithm uses the $\emph{cut}$ with respect to $f$ to decompose the problem into smaller problems. The cut with respect to $f$ contains an arc $(u, v)$ if and only if there is a residual path with respect to $f$ form a vertex of $L_s$ to $u$, but there is no residual path from $L_s$ to $v$. After solving the maximum flow problem in the residual network inside each connected component, and adding the resulting flows to $f$, the algorithm finishes by computing maximum flow from $R_s$ to $L_t$ in the residual network.

\begin{algorithm}
  \begin{lines}
    \item Partition $S \cup T$ into two disjoint sets $L$ and $R$.
    \item Compute a maximum flow $f$ from $L_s$ to $R_t$, and let $C$ be the cut with respect to $f$.
    \item Remove the edges of $C$ from the graph.
    \item For each connected component $K$ do:
    \begin{lines}
      \item Find a maximum flow $f'$ from $K_s$ to $K_t$ in the residual network $G_f$ restricted to $K$.
      \item Let $f \leftarrow f + f'$.
    \end {lines}
    \item Restore the edges of $C$ to the graph
    \item Find a maximum flow $f'$ from $R_s$ to $L_t$ in the residual network $G_f$.
    \item Let $f \leftarrow f + f'$.
    \item Return $f$.
  \end{lines}
  \caption{\@} \label{alg:LR}
\end{algorithm}

The correctness of Algorithm \ref{alg:LR} is proven in Lemma 6.1 of \cite{MN95}. It is described there for a specific choice of $L$ and $R$, but as Miller and Naor note in Section 7 of \cite{MN95}, the proof does not depend on this choice of the two sets. Intuitively, we can compute the maximum flow in each connected component separately because even if we had not removed the edges of $C$ from $G$, then a residual path from a vertex in $K_s$ to a vertex in $K_t$ could not contain an arc of $C$, since these arcs are not residual.

The following lemma allows us to use flow summation while keeping some invariant about non-existence of specific residual paths.

\begin{lemma} \label{lem:keepinv}
    Let $f$ be a pseudoflow such that there is no residual path with respect to $f$ from any vertex in a set $A \subseteq V$ to any vertex in a set $B \subseteq V$, let $f'$ be a flow in $G_f$ from $S$ to $T$. If $S \subseteq A$ or $T \subseteq B$, then there is no residual path from $A$ to $B$ with respect to $f + f'$.
\end{lemma}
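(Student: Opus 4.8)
The plan is to track the set of vertices reachable from $A$ under residual arcs and to show that adding $f'$ cannot enlarge this set so far as to touch $B$. Let $W$ be the set of all vertices $v$ for which there is a residual path with respect to $f$ from some vertex of $A$ to $v$. Then $A \subseteq W$, and since by hypothesis there is no residual $A$-to-$B$ path we have $W \cap B = \emptyset$. Moreover $W$ is closed under residual arcs of $f$: if $w \in W$ and $r_f((w,y)) > 0$, then $y \in W$. Equivalently, every arc $(w,x)$ leaving $W$ (with $w \in W$ and $x \notin W$) is saturated, $r_f((w,x)) = 0$. I would then reduce the lemma to the single claim that $W$ is also closed under residual arcs of $f+f'$; granting this, any residual path with respect to $f+f'$ that starts in $A \subseteq W$ must remain inside $W$ and therefore avoids $B$.

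To prove the claim, suppose for contradiction that some arc $(w,x)$ with $w \in W$ and $x \notin W$ is residual with respect to $f+f'$, that is $r_{f+f'}((w,x)) = r_f((w,x)) - f'((w,x)) > 0$. Since $(w,x)$ leaves $W$ it is saturated, $r_f((w,x)) = 0$, so the inequality forces $f'((w,x)) < 0$, i.e.\ $f'((x,w)) > 0$: the flow $f'$ pushes a positive amount from $x \notin W$ into $w \in W$. Because $f'$ is a flow in $G_f$ and thus respects the residual capacities $r_f$, positive flow on $(x,w)$ requires $r_f((x,w)) > 0$, so $(x,w)$ is itself a residual arc of $f$ directed into $W$. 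The point of the reduction is that the only way an arc can newly become residual is as the reverse of an arc on which $f'$ sends positive flow.

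Finally I would apply flow decomposition to $f'$. As $f'$ is a flow from $S$ to $T$ in $G_f$ (so its net flow leaves the sources and enters the sinks), its positive part decomposes into paths directed from a vertex of $S$ to a vertex of $T$ together with cycles, all using arcs $e$ with $f'(e)>0$, which are residual with respect to $f$; the arc $(x,w)$ lies on one of these. If it lies on a cycle, then following that cycle forward from $w$ along residual arcs returns to $x$, so closure of $W$ gives $x \in W$, a contradiction that needs neither hypothesis. If it lies on a path $s \to \cdots \to x \to w \to \cdots \to t$ with $s \in S$ and $t \in T$, then the two side conditions enter, and this is the step I expect to be the crux. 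When $S \subseteq A$, the prefix $s \to \cdots \to x$ is a residual path from $s \in S \subseteq A \subseteq W$ to $x$, so closure again yields $x \in W$, contradicting $x \notin W$. When $T \subseteq B$, the suffix $w \to \cdots \to t$ is a residual path from $w \in W$ to $t \in T \subseteq B$, so closure gives $t \in W$, contradicting $W \cap B = \emptyset$. Either way we reach a contradiction, establishing the claim and hence the lemma. The main obstacle is precisely isolating why one of the two hypotheses is indispensable: the decomposition shows that any flow path through the offending arc has one endpoint in $S$ and the other in $T$, and each hypothesis lets me derive a contradiction from exactly one endpoint---via closure of $W$ on the source side, and via disjointness of $W$ and $B$ on the sink side.
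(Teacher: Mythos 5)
Your proof is correct, and it takes a genuinely different route from the paper's. The paper uses the same Ford--Fulkerson decomposition of $f'$ into source-to-sink paths and cycles, but then adds these pieces to $f$ \emph{one at a time} and runs a minimal-counterexample argument: it considers the first residual $A$-to-$B$ path $Q$ created during this process, takes the path or cycle $Q'$ whose addition created it, and splices a prefix of $Q'$ with a suffix of $Q$ (or vice versa, choosing the last or first common vertex according to whether $S \subseteq A$ or $T \subseteq B$) to exhibit a residual $A$-to-$B$ path that already existed \emph{before} the addition --- a contradiction. You instead treat $f'$ in one shot via a cut argument: $W$ is the set of vertices residually reachable from $A$ under $f$, every arc leaving $W$ is saturated, and you show $W$ remains closed under residual arcs of $f+f'$, since a newly residual arc $(w,x)$ crossing out of $W$ forces $f'((x,w))>0$, and the decomposition path or cycle through $(x,w)$ then contradicts either closure of $W$ (the cycle case, and the source side when $S \subseteq A$) or $W \cap B = \emptyset$ (the sink side when $T \subseteq B$). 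Your version localizes the question of which arcs can newly become residual to a single arc crossing a fixed cut, which makes fully explicit a point the paper leaves implicit --- that the spliced portion of $Q$ was already residual before $Q'$ was added, which in the paper rests on the careful choice of the first/last common vertex; the paper's version, in exchange, needs no reachability set and reads directly off its two figures. Both arguments expose the role of the hypotheses at the same spot: each endpoint of a decomposition path yields a contradiction under exactly one of $S \subseteq A$ or $T \subseteq B$.
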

\begin{proof}
    We can view the flow $f'$ as the sum of paths that carry flow from $S$ to $T$ and of cycles that carry flow \cite{FF62}. Thus, we can view the summation of $f$ and $f'$ as adding the paths and cycles of flow that compose $f'$ to $f$ one by one in some arbitrary order.

    Assume for contradiction that the claim is not true. Before adding $f'$ to $f$ there was no residual path from $A$ to $B$, then at some point a residual path $Q$ was created from a vertex of $A$ to a vertex of $B$. We choose $Q$ to be the first such residual path created, as we added one by one the paths and cycles which compose $f'$, and let $Q'$ be the path or cycle of $f'$ whose addition made $Q$ residual (if the addition of $Q'$ created more than one residual path from $A$ to $B$, we choose one of them arbitrarily to be $Q$).

    Consider the case where $Q'$ is a path (see Fig.~\ref{fig:keepinv:path}). The paths $Q$ and $Q'$ must share a common vertex $v$. If $S \subseteq A$ then we choose $v$ such that $v$ is the last vertex of $Q'$ in $Q$, and let $Q''$ be the path that begins with the prefix of $Q'$ before $v$ and ends with the suffix of $Q$ that starts at $v$. If $T \subseteq B$ then we choose $v$ such that $v$ is the first vertex of $Q'$ in $Q$, and let $Q''$ be the path that begins with the prefix of $Q$ before $v$ and ends with the suffix of $Q'$ that starts at $v$. In both cases, the path $Q''$ was a residual path from $A$ to $B$ before we added $Q'$ to the flow, contradicting the choice of $Q$.

    In the case where $Q'$ is a cycle (see Fig.~\ref{fig:keepinv:cycle}), let $u$ be the first vertex of $Q$ that is also a vertex of $Q'$ and let $v$ be the last vertex of $Q$ common with $Q'$. Since $Q'$ is a cycle it must contain a path from $u$ to $v$. Let $Q''$ be the path that we obtain by replacing the subpath of $Q$  from $u$ to $v$ with the subpath of $Q'$. Again, the path $Q''$ was a residual path from $A$ to $B$ before we added $Q'$ to the flow, contradicting the choice of $Q$.
\end{proof}

\begin{figure}
    \centering
    \subfigure[$Q'$ is a path carrying flow from $s \in S$ to $t \in T$.]{\quad \includegraphics[scale=0.75]{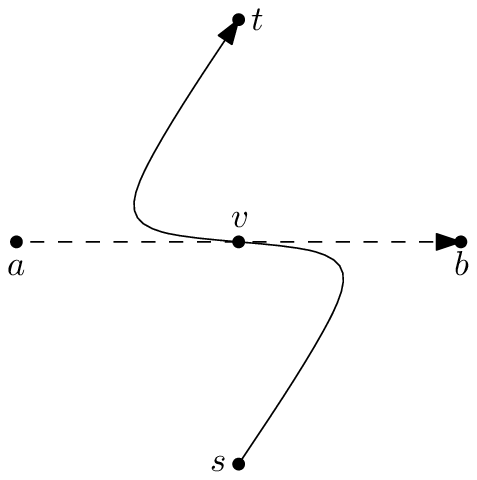} \quad \label{fig:keepinv:path}}
    \qquad
    \subfigure[$Q'$ is a cycle of flow.]{\quad \includegraphics[scale=0.75]{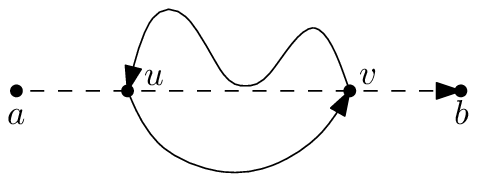} \quad \label{fig:keepinv:cycle}}
    \caption{Proof of Lemma \ref{lem:keepinv}. The path $Q$ (\emph{dashed}) is a residual path from $a \in A$ to $b \in B$, which was created after adding $Q'$ (\emph{solid}). In both cases, when $S \subseteq A$ or $T \subseteq B$, we get that there was a residual path from $A$ to $B$ also before adding $Q'$.}
\end{figure}

\section{The algorithm}

The general framework of our algorithm is Algorithm \ref{alg:LR}, which we described in Section \ref{sec:sum}. This algorithm is recursive, for the base of the recursion we can use the cases where $|S| = O(1)$, $|T| = O(1)$ or $|S||T| = O(\sqrt{n})$, in all of these cases we can find a maximum flow from $S$ to $T$ in $O(n^{3/2} \log n)$ time, as described in the introduction.

We use a procedure that computes a maximum flow between sources that are in one piece of the graph and sinks that are in another piece, with respect to a separator of size $r$. In the next section, we give an algorithm for this task which runs in $O(n^{3/2} \log n + rn)$ time.

We assign to each source or sink in $G$ a weight of $(|S| + |T|)^{-1}$, the other vertices have weight $0$, and find a separator $P$ that separates the graph into two pieces -- $X$ and $Y$, such that the total number of sources and sinks in each of them at most $(2/3)(|S|+|T|)+ O(\sqrt{n})$ (the term $O(\sqrt{n})$ comes from the fact that the vertices of $P$, that are common to $X$ and $Y$, can also be sources or sinks). We split the sources and sinks of $G$ into two disjoint sets $L$ and $R$ (Step 1 of Algorithm \ref{alg:LR}) as follows -- $L = X_s \cup X_t \setminus P_s$, $R = Y_s \cup Y_t \setminus P_t$. In other words, we define $L$ and $R$ according to $X$ and $Y$, and for the vertices of $P$, which are common to $X$ and $Y$, we put the sinks in $L$ and the sources in $R$.

Next (Step 2), we find a maximum flow from $L_s$ to $R_t$. With the algorithm of Section \ref{sec:sidetoside}, this takes $O(n^{3/2} \log n)$ time. After we found the maximum flow $f$, it is easy to find its cut $C$ in linear time. We remove the edges of $C$ from the graph (Step 3).

Now (Step 4), we find a maximum flow inside each connected component. Let $K$ be a component of $G \setminus C$ with $n_K$ vertices. The component $K$ is of one of two types, either $K \cap R_t$ is empty or $K \cap L_s$ is empty.

For the first type we use Lemma \ref{lem:subset} to find a maximum flow from $K_s$ to $K_t$ in three stages. First we find a maximum flow from $K \cap R_s$ (if it is not empty) to $K_t = K \cap L_t$. Now it remains to find a maximum flow from $K \cap L_s$ to $K_t$ in the residual network. We split $K_t$ into two parts, we find a maximum flow from $K \cap L_s$ to $K_t \cap P_t$ in the residual network, and from $K \cap L_s$ to $K_t \setminus P_t$ in the residual network (each time we compute a flow, we add it to $f$). We use the algorithm of Section \ref{sec:sidetoside} to perform the first two maximum flow computations in $O(n_K^{3/2} \log n_K + |P \cap K|n_K)$ time (note that $P \cap K$ remains a cycle separator between the sources and the sinks in these two cases), and we use a recursive application of our main algorithm for the third computation. For the second type of components we use Lemma \ref{lem:subset} is a similar way, we compute maximum flow from $K_s = K \cap R_s$ to $K_t \cap L_t$, then from $K_s \cap P_s$ to $ K_t \cap R_t$ in the residual network, and last from $K_s \setminus P_s$ to $K_t \cap R_t$ in the residual network. Again, the first two maximum flow computations are applications of Section \ref{sec:sidetoside}, and the third is a recursive call. The different connected components are pairwise vertex-disjoint, so the total time in Step 4 for all the applications of the algorithm of Section \ref{sec:sidetoside} is $O(n^{3/2} \log n)$. The sources and sinks in each recursive call are either all contained in $L$ or all contained in $R$, and none of them is in $P$, so there are at most $(2/3)(|S|+|T|)$ sources and sinks in each call.

We restore the edges of $C$ (Step 5), and find a maximum flow from $R_s$ to $L_t$ in the residual network (Step 6), again this takes $O(n^{3/2} \log n)$ time using the algorithm of Section \ref{sec:sidetoside}. This concludes our implementation of Algorithm \ref{alg:LR}.

The correctness of our algorithm is derived from the correctness of Algorithm \ref{alg:LR} \cite{MN95}. We call the algorithm recursively on each connected component of $G \setminus C$, the components are disjoint, and the number of sources and sinks in each component is at most $(2/3)(|S|+|T|)$, therefore the total time bound of our algorithm is $O(n^{3/2} \log^2 n)$.

\begin{theorem}
    There is an $O(n^{3/2} \log^2 n)$ time algorithm that solves the multiple-source multiple-sink maximum flow problem in a planar graph with $n$ vertices. 
\end{theorem}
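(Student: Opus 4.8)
\section*{Proof proposal}

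The plan is to prove the theorem by showing that the implementation of Algorithm \ref{alg:LR} described above is both correct and runs within the claimed bound, splitting the argument into a correctness part and a running-time part.

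For correctness, I would invoke the correctness of Algorithm \ref{alg:LR}, established in Lemma 6.1 of \cite{MN95}, which holds for any partition of $S \cup T$ into $L$ and $R$ and in particular for the separator-based partition $L = X_s \cup X_t \setminus P_s$, $R = Y_s \cup Y_t \setminus P_t$. It then remains to check that each step of the generic algorithm is realized correctly. Steps 2 and 6 are single side-to-side maximum-flow computations and are handled directly by the procedure of Section \ref{sec:sidetoside}. For Step 4, I would verify that the three-stage computation in each component indeed produces a maximum flow from $K_s$ to $K_t$: this follows from Lemma \ref{lem:subset}, applied once to split off the flow from the ``wrong-side'' sources and a second time to split the target sink set across $P$, while Lemma \ref{lem:keepinv} guarantees that each flow summation preserves the absence of the residual paths that the earlier stages eliminated, so that the partial flows never interfere. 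The two non-recursive stages are side-to-side computations for which $P \cap K$ is still a separator between the relevant sources and sinks, and the third stage is a recursive call on a subproblem all of whose sources and sinks lie on one side and avoid $P$. The base cases $|S| = O(1)$, $|T| = O(1)$, or $|S||T| = O(\sqrt{n})$ are solved directly as noted in the introduction.

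For the running time, I would set up a recurrence on $m = |S| + |T|$. Because every recursive call is made on a subproblem with at most $(2/3)(|S|+|T|)$ sources and sinks, the depth of the recursion is $O(\log m) = O(\log n)$. The recursive calls at any fixed depth are made on connected components of $G$ with the cut edges removed, which partition the vertex set and are therefore pairwise vertex-disjoint; hence the sizes $n_K$ of the subproblems at a single level satisfy $\sum_K n_K \le n$. Using the superadditivity $\sum_K n_K^{3/2} \le \bigl(\sum_K n_K\bigr)^{3/2}$ together with the separator bound $\sum_K |P \cap K|\, n_K \le n \sum_K |P \cap K| = O(n^{3/2})$, the total non-recursive work at each level---Steps 2 and 6 plus all the side-to-side computations inside Step 4---is $O(n^{3/2} \log n)$. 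Multiplying by the $O(\log n)$ levels yields the overall bound $O(n^{3/2} \log^2 n)$.

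The main obstacle is the running-time accounting rather than the correctness, which is essentially inherited from \cite{MN95}. The difficulty is that the ambient graph does not necessarily shrink as the recursion proceeds---only the number of sources and sinks does---so a naive per-call charge of $O(n^{3/2} \log n)$ could blow up. The key is the level-by-level argument: I must argue that the subproblems sharing a recursion depth are vertex-disjoint and that $n^{3/2} \log n$ is superadditive, so that each of the $O(\log n)$ levels costs only $O(n^{3/2} \log n)$ in total. A secondary point to check carefully is that after removing the cut $C$, the portion $P \cap K$ of the separator still separates the sources from the sinks in each of the side-to-side subproblems of Step 4, so that the $O(n_K^{3/2} \log n_K + |P \cap K|\, n_K)$ bound of Section \ref{sec:sidetoside} genuinely applies.
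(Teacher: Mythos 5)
Your proposal is correct and takes essentially the same approach as the paper: correctness is inherited from Lemma 6.1 of \cite{MN95} together with Lemma \ref{lem:subset} for the three-stage computation inside each component, and the running time follows from the $O(\log n)$ recursion depth combined with the vertex-disjointness of the components at each level. Your explicit level-by-level accounting (superadditivity of $n_K^{3/2}\log n_K$ and the bound $\sum_K |P \cap K| = O(\sqrt{n})$) simply spells out the argument that the paper states tersely.
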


\section{Maximum flow from one side of a separator to another} \label{sec:sidetoside}

In this section we present a solution for a special case of the multiple-source multiple-sink maximum flow problem, where all the sources are in one piece of the graph, and all the sinks are in another piece. We use this as a procedure in our main algorithm. This procedure is similar to a procedure used by Borradaile and Wulff-Nilsen \cite{BW10}, but here we can use a multiple-source single-sink maximum flow algorithm (\cite{BW10} or \cite{KM10}) as a sub-procedure, to avoid (explicit) recursion.

Let $X$ be the piece of $G$ that contains $S$, and let $Y$ be the piece of $G$ that contains $T$. Let $P= {p_1, \dots, p_r}$ be the separator of $r$ vertices that separates between $X$ and $Y$. Since $P$ is a cycle separator, for every $1 \leq i < r$ it is possible to add an edge between $p_i$ and $p_{i+1}$ without violating the planarity of the graph, we will do so below.

For simplicity we assume that no vertex of $P$ is a source or a sink. If this is not true then we can fix it easily -- assume that $s \in P$ is a source, then we add a vertex $s'$ inside a face adjacent to $s$ on the side of $X$ of the separator, connect $s'$ to $s$ with an edge of capacity $\M$, where $\M$ is the sum of all capacities of all arcs in $G$, and make $s'$ a source instead of $s$. The case where $t \in P$ is a sink is handled symmetrically. Note that this transformation keeps the number of vertices and the number of edges in the graph $O(n)$.

We begin by finding a maximum flow $f_X$ from $S$ to all vertices of $P$, inside the piece $X$. We add to $X$ a super-sink $t$ in the place where $Y$ is in the plane, and connect all the vertices of $P$ to $t$ with arcs of capacity $\M$. Since $P$ is a cycle separator, adding $t$ to $X$ preserves the planarity of $X$. We find a multiple-source single-sink maximum flow from $S$ to $t$, then $f_X$ is the restriction of this flow to $X$. Symmetrically, we find a maximum flow $f_Y$ from all the vertices of $P$ to $T$, inside the piece $Y$.

Let $f = f_X + f_Y$. Since $X$ and $Y$ are edge disjoint, $f$ is a pseudoflow in $G$ (that is, it satisfies the capacity constraint and the antisymmetry constraint). In the pseudoflow $f$ the vertices that are not sources and not sinks that may have non-zero flow excess are only vertices of $P$.

Next we show how to transform $f$ from pseudoflow to a flow. We do so by balancing the flow excess of each vertex of $P$ with non-zero excess, one vertex after the other, while keeping the following invariant:
\begin{invariant} \label{inv:max}
    Let $P' \subseteq P$ be the vertices of $P$ that we did not process yet. There is no residual path with respect to $f$ from a vertex of $S$ to a vertex of $T$, or from a vertex of $S$ to a vertex of $P'$, or from a vertex of $P'$ to a vertex of $T$.
\end{invariant}
Initially the invariant is true, there are no residual paths from $S$ to $P$ because $f_X$ is maximum, and no residual paths from $P$ to $T$ because $f_Y$ is maximum, also any path from $S$ to $T$ must contain a vertex of $P$, so there are no residual paths from $S$ to $T$. When we are done, $P'$ is empty, and all the vertices of $P$ have $0$ flow excess, so $f$ is a flow function. Since there is no residual path with respect to $f$ from $S$ to $T$, $f$ is a maximum flow.

We go over the vertices of $P$, from $p_1$ to $p_r$. Let $p_i$ be the current vertex that we process, $P' = \{p_j \mid j \geq i\}$ is the set of vertices that we did not process yet. If $\exc{p_i} = 0$ we do nothing and proceed to the next vertex.

The second case is when $\exc{p_i} > 0$. In this case, if $i < r$, we add arcs $(p_r, p_{r-1})$, $(p_{r-1}, p_{r-2})$, $\dots$, $(p_{i+2}, p_{i+1})$, each with capacity $\M$. We find a maximum flow $f_i$, whose value is \emph{bounded} by $\exc{p_i}$, from $p_i$ to $p_{i+1}$ in the residual network. We bound the value of the maximum flow by $\exc{p_i}$ by adding a source $s'$ inside the face that is adjacent both to $p_i$ and to $p_{i+1}$ and connecting $s'$ to $p_i$ with an arc of capacity $\exc{p_i}$, then we actually find a maximum flow from $s'$ to $p_{i+1}$. We remove the arcs we added from $p_r$ to $p_{i+1}$ from the graph, and let $f = f + f_i$. This takes $O(n)$ time using the algorithm of Hassin \cite{H81} for maximum flow in a planar graph where the single source and the single sink are on the same face, with the shortest path algorithm of Henzinger et al.~\cite{HKRS97}.

\begin{lemma}
    After adding $f_i$, the pseudoflow $f$ satisfies Invariant \ref{inv:max}.
\end{lemma}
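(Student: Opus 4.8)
The plan is to verify each of the three non-existence clauses of Invariant \ref{inv:max} for the updated set $P' = \{p_{i+1}, \ldots, p_r\}$, in both cases by invoking the mechanism of Lemma \ref{lem:keepinv}. First I would pin down exactly what flow is being added to $f$. Although $f_i$ is computed in the graph augmented with the backward arcs $(p_r, p_{r-1}), \ldots, (p_{i+2}, p_{i+1})$, once these arcs are deleted the object added to $f$ is the restriction of $f_i$ to the arcs of $G$, which is a flow in $G_f$ (it respects the residual capacities $r_f$ on $G$). Since the deleted arcs have both endpoints in $\{p_{i+1}, \ldots, p_r\}$, deleting them can disturb flow conservation only at those vertices, whereas $p_i$ (the unique external source, with net outflow equal to the value of $f_i \ge 0$) is untouched and in particular is not a sink. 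Hence the added flow decomposes into paths from a source set $\sigma \subseteq \{p_i, p_{i+1}, \ldots, p_r\}$ to a sink set $\tau \subseteq \{p_{i+1}, \ldots, p_r\}$, together with cycles. I will use Lemma \ref{lem:keepinv} in the form where the added flow runs between its own source and sink sets $\sigma$ and $\tau$; its proof is unchanged, since it only uses that the flow-paths run from $\sigma$ to $\tau$, so the hypothesis becomes ``$\sigma \subseteq A$ or $\tau \subseteq B$''.

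For the first two clauses — no residual path from $S$ to $T$ and none from $S$ to $P'$ — I would apply the lemma with $A = S$ and $B = T \cup \{p_{i+1}, \ldots, p_r\}$. Before adding $f_i$, the invariant for the old unprocessed set $\{p_i, \ldots, p_r\} \supseteq \{p_{i+1}, \ldots, p_r\}$ already gives no residual path from $S$ to any vertex of $B$. Because all sinks of the added flow lie in $\{p_{i+1}, \ldots, p_r\} \subseteq B$, we have $\tau \subseteq B$, so the lemma yields no residual path from $S$ to $B$ afterwards, which is precisely the first two clauses.

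For the third clause — no residual path from $P' = \{p_{i+1}, \ldots, p_r\}$ to $T$ — I would instead apply the lemma with $A = \{p_i, p_{i+1}, \ldots, p_r\}$ (the unprocessed set before $p_i$) and $B = T$. Before adding $f_i$ the invariant gives no residual path from $A$ to $T$, and all sources of the added flow lie in $A$, so $\sigma \subseteq A$; the lemma then yields no residual path from $A$ to $T$ afterwards, and in particular none from the subset $\{p_{i+1}, \ldots, p_r\}$ to $T$. Together with the previous paragraph this re-establishes all three clauses for the updated $P'$, which proves the lemma. The case $\exc{p_i} < 0$ is handled symmetrically.

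The step I expect to be the main obstacle is the first one: correctly identifying the source and sink sets of the flow that is actually added to $f$. The backward arcs are used by $f_i$ to cascade excess down the separator to the sink $p_{i+1}$, and deleting them ``deposits'' that flow at intermediate separator vertices, so the added flow is genuinely a multi-sink flow out of $p_i$ into $\{p_{i+1}, \ldots, p_r\}$ rather than a clean single-source single-sink flow. The two containments I rely on — sources $\subseteq \{p_i, \ldots, p_r\}$ and sinks $\subseteq \{p_{i+1}, \ldots, p_r\}$ — must be justified from the facts that backward-arc deletion perturbs conservation only at $\{p_{i+1}, \ldots, p_r\}$ and leaves $p_i$ a source; getting these containments right is exactly what makes the two applications of Lemma \ref{lem:keepinv} (one using $\tau \subseteq B$, the other using $\sigma \subseteq A$) go through.
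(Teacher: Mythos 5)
Your core technique is exactly the paper's: restrict $f_i$ to the arcs of $G$, observe that this restriction is a flow whose sources and sinks lie among separator vertices, and apply Lemma \ref{lem:keepinv} twice, reading the lemma as applying to the added flow's own source and sink sets (this reading is legitimate, and it is also how the paper itself uses the lemma). Your two applications differ from the paper's only in the choice of the sets: the paper takes $A = P' \cup S$, $B = T$ and then $A = S$, $B = P'$, which is a cosmetic difference.

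There is, however, one discrepancy worth flagging: you verify the invariant for the updated set $\{p_{i+1},\dots,p_r\}$, but at the point where this lemma is invoked, $p_i$ has \emph{not} finished being processed --- after adding $f_i$ its excess may still be positive, and the excess-returning step (the subject of the next lemma) comes afterwards. So Invariant \ref{inv:max} here still refers to $P' = \{p_j \mid j \geq i\}$, and in particular asserts that there is no residual path from $S$ to $p_i$; your proof never establishes that clause, whereas the paper's does (its second application has $B = P'$ with $p_i \in P'$). The omission is easy to repair inside your own framework: in your first application simply take $B = T \cup \{p_i,\dots,p_r\}$; the hypothesis that there was no residual path from $S$ to $B$ beforehand is still exactly the old invariant, and the condition $\tau \subseteq B$ holds a fortiori. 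With that choice, your careful argument that $p_i$ cannot be a sink of the added flow becomes unnecessary --- the paper sidesteps it by noting only that the restriction of $f_i$ is a flow from a subset of $P'$ to a subset of $P'$. (As it happens, the weaker facts you prove do suffice for the paper's subsequent excess-returning lemma, which uses only the absence of residual paths from $S$ to $\{p_{i+1},\dots,p_r\}$ and from $\{p_i,\dots,p_r\} \cup S$ to $T$; but the lemma as stated claims more than what you verify.)
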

\begin{proof}
    As in the proof of Lemma \ref{lem:keepinv}, we can view the flow $f_i$ as the sum of simple paths that carry flow from $p_i$ to $p_{i+1}$ and simple cycles of flow \cite{FF62}. We computed the flow $f_i$ on a graph that contains $G$ and additional edges between members of $P'$. If we restrict $f_i$ to the original graph $G$, then we can view $f_i$ as the sum of simple paths, each carries flow from one vertex of $P'$ to another vertex of $P'$, and of simple cycles of flow. Therefore, the restriction of $f_i$ to $G$ is a flow function from a subset of $P'$ to another subset of $P'$.

    Before adding $f_i$ to $f$, there was no residual path from $P' \cup S$ to $T$ by Invariant \ref{inv:max}. From Lemma \ref{lem:keepinv} we get that the same is true after adding $f_i$ to $f$. Similarly, before adding $f_i$ to $f$ there was not residual path from  $S$ to $P'$, again from Lemma \ref{lem:keepinv} we get that the same is true also after adding $f_i$ to $f$. We conclude that $f$ satisfies Invariant \ref{inv:max} also after adding $f_i$.
\end{proof}

After we added $f_i$ to $f$, it is possible that $\exc{p_i} = 0$, if this is the case then we are done with $p_i$. Otherwise (or if $i = r$ and we skipped the computation of $f_i$) we return a flow of value $\exc{p_i}$ from $p_i$ to $S$ and to vertices of $P'$ with negative flow excess. We return the flow using a process of Johnson and Venkatesan \cite{JV83}. First, we make the pseudoflow $f$ acyclic, then we send back from $p_i$ a flow of value $\exc{p_i}$ along a reverse topological order of $f$. Due to the antisymmetry constraint, we can view this process as adding to $f$ a flow function $f'_i$ from $p_i$ to a set of vertices $U \subseteq S \cup P'$, each with a negative flow excess in $f$. This takes $O(n)$ time using the algorithm of Kaplan and Nussbaum for canceling cycles of flow \cite{KN10}.

Now the flow excess of $p_i$ is $0$, this finishes the processing of $p_i$ for the case where the flow excess of $p_i$ was initially positive. After this step $P' = \{p_j \mid j > i\}$.

\begin{lemma}
    After returning the excess flow from $p_i$ to $S$, the pseudoflow $f$ satisfies Invariant~\ref{inv:max}.
\end{lemma}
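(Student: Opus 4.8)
The plan is to mirror the incremental argument of Lemma~\ref{lem:keepinv}: decompose the returned flow $f'_i$ into simple paths that carry flow from $p_i$ to the vertices of $U\subseteq S\cup P'$ and simple cycles, add these pieces to $f$ one at a time, and check that no forbidden residual path is ever created. Throughout I write $P''=P'\setminus\{p_i\}$ for the set of vertices that remain unprocessed after $p_i$; since $p_i$ is the source rather than a sink of $f'_i$ we have $p_i\notin U$, so in fact $U\subseteq S\cup P''$.

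Two of the three clauses of Invariant~\ref{inv:max} should come for free from a single appeal to Lemma~\ref{lem:keepinv}. Before this step the invariant (with $P'$) gives no residual path from $S$ to $T$ and none from $P'$ to $T$, hence none from $S\cup P'$ to $T$. Since the unique source of $f'_i$ is $p_i\in P'\subseteq S\cup P'$, the source-containment hypothesis of Lemma~\ref{lem:keepinv} holds with $A=S\cup P'$ and $B=T$, so after adding $f'_i$ there is still no residual path from $S\cup P'$ to $T$. As $P''\subseteq P'$, this yields both ``no residual path from $S$ to $T$'' and ``no residual path from $P''$ to $T$''.

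The remaining clause, no residual path from $S$ to $P''$, is the heart of the lemma and is where I expect the work to lie; here I would exploit the special form of the return. Because the excess is pushed back along a reverse topological order of the acyclic $f$, every arc used by $f'_i$ is the reverse of a flow-carrying arc of $f$; equivalently, each flow path of $f'_i$ from $p_i$ to a vertex $u\in U$ is the reverse of a flow path of $f$ from $u$ to $p_i$, so every newly residual arc lies on such an $f$-flow path and is oriented \emph{towards} $p_i$. I would then run the splicing argument of Lemma~\ref{lem:keepinv}: if adding a piece $Q'$ (a flow path of $f'_i$ ending at some $u\in U$) first creates a residual $S\to P''$ path $Q$, splice $Q$ and $Q'$ at a shared vertex; following $Q'$ back to $u$ and using the reverse-flow residual arcs produces a residual path, present \emph{before} this step, from the start of $Q$ in $S$ to $u$. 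When $u\in P''$ this is a residual $S\to P''$ (hence $S\to P'$) path, contradicting the old invariant.

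The genuine obstacle is the case $u\in S$, that is, flow returned all the way back to a source. Then the spliced path only reaches another source, and the arcs that become residual merely reopen a route into $p_i$ (a vertex that $S$ is now permitted to reach, since $p_i$ has left $P'$), so the direct splice yields no contradiction. To close this case I would bring in the maximality of the separator flows: $f_X$ is a maximum flow from $S$ to \emph{all} of $P$ inside $X$, so before the return there is no residual $S\to P$ path within $X$, and the return only \emph{decreases} the flow along the path $u\to p_i$. The step I expect to be most delicate is making precise that un-saturating an arc on an $S$-to-$p_i$ flow path can only extend $S$-reachability \emph{toward} $p_i$, and never to a vertex of $P''$, unless a pre-existing residual $S\to P'$ path is simultaneously exhibited — in effect, tracking how the reopened arcs interact with the $f_X$ min-cut so that $S$ still cannot residually reach $P''$.
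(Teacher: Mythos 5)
Your treatment of the first two clauses (no residual path from $S$ to $T$ and none from $P''$ to $T$, where $P'' = P'\setminus\{p_i\}$ in your notation) via Lemma~\ref{lem:keepinv} with $A = S \cup P'$ and $B = T$ is exactly what the paper does. But for the clause you correctly identify as the heart of the matter --- no residual path from $S$ to $P''$ --- your proposal has a genuine gap, and the obstacle you name (flow returned to a source $u \in S$, reopening arcs toward $p_i$) is a symptom of a missing fact rather than something to be fought through with the $f_X$ min-cut. The missing fact is this: \emph{before} the return step there is no residual path from $p_i$ to any vertex of $P''$. This is exactly what the maximality of $f_i$ buys, and it is the entire reason the algorithm adds the chain of arcs $(p_r,p_{r-1}),\dots,(p_{i+2},p_{i+1})$ of capacity $\M$: any residual path from $p_i$ to some $p_j \in P''$ could be extended along these high-capacity arcs into a residual path from $p_i$ to $p_{i+1}$, contradicting the maximality of $f_i$. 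Your proposal never invokes $f_i$'s maximality or the role of these auxiliary arcs, and without them the clause cannot be recovered.

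Once you have that fact, the case analysis you struggle with disappears: apply Lemma~\ref{lem:keepinv} with $A = \{p_i\} \cup S$ and $B = P''$. The hypothesis ``no residual path from $A$ to $B$'' holds (from the invariant for $S$, and from the new fact for $p_i$), and the source set of $f'_i$ is $\{p_i\} \subseteq A$, so the lemma's source-containment condition is met \emph{regardless} of whether the flow is returned to vertices of $S$ or of $P''$ --- the lemma constrains the sources of $f'_i$, not its sinks. The conclusion is that after adding $f'_i$ there is still no residual path from $\{p_i\} \cup S$ to $P''$, in particular none from $S$ to $P''$. By contrast, your fallback --- appealing to the maximality of $f_X$ --- cannot work as stated: $f_X$'s min-cut describes the residual structure of the initial pseudoflow, but $f$ has since been modified by all the flows $f_j, f'_j$ for $j < i$ and by $f_i$ itself; the only property that survives those modifications is Invariant~\ref{inv:max}, and the invariant alone (without the $p_i$-to-$P''$ fact) is exactly what your own splicing argument shows to be insufficient.
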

\begin{proof}
    Before returning the excess flow from $p_i$ to $U \subseteq S \cup P'$ by adding $f'_i$ to $f$, there was no residual path from $p_i$ to any other vertex of $P'$, we get this from the maximality of $f_i$, since the arcs with capacity $\M$ which we added between vertices of $P'$ allow to extend any residual path from $p_i$ to any other vertex of $P'$ to a residual path from $p_i$ to $p_{i+1}$. Also before we added $f'_i$ to $f$, there was no residual path from $S$ to $P'$ from Invariant \ref{inv:max}. Therefore, after adding $f'_i$ to $f$ there is still no residual path from $S$ to $P'$ by Lemma \ref{lem:keepinv} (for the set $A$ in the statement of the lemma we used here the set $\{p_i\} \cup S$, note that $P'$ does not contain $p_i$ anymore). In addition, before we added $f'_i$ to $f$, there was no residual path from $\{p_i\} \cup P' \cup S$ to $T$ from Invariant \ref{inv:max}, and this remains true after adding $f'_i$ to $f$ by Lemma \ref{lem:keepinv}.
\end{proof}

The third case of processing $p_i$ is when $\exc{p_i} < 0$. In this case we symmetrically add arcs $(p_{i+1}, p_{i+2}), (p_{i+2}, p_{i+3}), \dots, (p_{r-1}, p_r)$, each with capacity $\M$. We find a maximum flow $f_i$, whose value is bounded by $-\exc{p_i}$, from $p_{i+1}$ to $p_i$ in the residual network, we remove the arcs we added from $p_{i+1}$ to $p_r$ from the graph, and let $f = f + f_i$. Then, if $\exc{p_i}$ remains negative, we send back flow of value $-\exc{p_i}$ from $T$ to $p_i$. This case also keeps Invariant \ref{inv:max}, the proof is symmetric to the previous case.

When we are done, the flow excess in each vertex of $P$ is $0$, so $f$ is a flow function. By Invariant \ref{inv:max}, there is no residual path with respect to $f$ from a vertex of $S$ to a vertex of $T$. Therefore $f$ is a maximum flow.

It takes us $O(n^{3/2} \log n)$ time to find $f_X$ and $f_Y$, and for each vertex of $P$ we spend $O(n)$ time to eliminate any non-zero flow excess it has. Therefore, the total time for the procedure in this section is $O(n^{3/2} \log n + rn)$.

We note that the bottleneck of our algorithm is the multiple-source single-sink maximum flow computation. For the interesting case where $G$ is a grid, it is possible to use techniques similar to these presented here to improve the running time of the multiple-source single-sink algorithms \cite{BW10,KM10} to $O(n^{3/2})$. This will also improve the time bound of our multiple-source multiple-sink maximum flow algorithm to $O(n^{3/2} \log n)$.

\section*{Acknowledgments}
The author would like to thank Haim Kaplan and Christian Wulff-Nilsen for their comments on the paper.


\begin{thebibliography}{10}
\small
\bibitem{BK09}
Borradaile, G., Klein, P.:
\newblock An $O(n \log n)$ algorithm for maximum $st$-flow in a directed planar graph.
\newblock J.\ ACM 56, 1--30 (2009)

\bibitem{BW10}
Borradaile, G., Wulff-Nilsen, C.:
\newblock Multiple source, single sink maximum flow in a planar graph.
\newblock arXiv:1008.4966v1 [cs.DM] (2010)

\bibitem{BK04}
Boykov, Y., Kolmogorov, V.:
\newblock An experimental comparison of min-cut/max-flow algorithms for energy minimization in vision.
\newblock IEEE T.\ on PAMI 26, 1124--1137 (2004)

\bibitem{E10}
Erickson, J.:
\newblock Maximum flows and parametric shortest paths in planar graphs.
\newblock In: Proceedings of the 21st Annual ACM-SIAM Symposium on Discrete Algorithms, pp.~794--804 (2010)

\bibitem{FF62}
Ford, L.R., Fulkerson, D.R.:
\newblock Flows in Networks.
\newblock Princeton University Press, New Jersey (1962)

\bibitem{GT88}
Goldberg, A., Tarjan, R.E.:
\newblock A new approach to the maximum-flow problem.
\newblock J.\ ACM 35, 921--940 (1988)

\bibitem{GT90}
Goldberg, A., Tarjan, R.E.:
\newblock Finding minimum-cost circulations by successive approximation.
\newblock Math.\ Oper.\ Res.\ 15, 430-466 (1990)

\bibitem{GPS89}
\newblock Greig, D., Porteous, B., Seheult, A.:
\newblock Exact maximum a posteriori estimation for binary images.
\newblock J.\ R.\ Statist.\ Soc.\ B 51, 271--279 (1989)

\bibitem{H81}
Hassin, R.:
\newblock Maximum flow in $(s, t)$ planar networks.
\newblock Inform.\ Process.\ Lett.\ 13, 107 (1981)

\bibitem{HKRS97}
Henzinger, M.R., Klein, P., Rao, S., Subramania, S.:
\newblock Faster shortest-path algorithms for planar graphs.
\newblock J.\ Comput.\ Syst.\ Sci. 55, 3--23 (1997)

\bibitem{H08}
Hochbaum, D.S.:
\newblock The pseudoflow algorithm: a new algorithm for the maximum-flow problem.
\newblock Oper.\ Res.\ 56, 992--1009 (2008)

\bibitem{HT76}
Hopcroft, J., Tarjan, R.:
\newblock Efficient planarity testing.
\newblock J.\ ACM 21, 549--568 (1974)

\bibitem{JV83}
Johnson, D.B., Venkatesan, S.M.:
\newblock Partition of planar flow networks.
\newblock In: Proceedings of the 24th Annual Symposium on Foundations of Computer Science, pp.~259--264 (1983)

\bibitem{KM10}
Klein, P.N., Mozes, S.:
\newblock Multiple-source single-sink maximum flow in directed planar graphs in $O(n^{1.5} \log n)$ time.
\newblock arXiv:1008.5332v2 [cs.DS] (2010)

\bibitem{KN10}
Kaplan, H., Nussbaum, Y.:
\newblock Maximum flow in directed planar graphs with vertex capacities.
\newblock Algorithmica, {\em in press}.

\bibitem{M86}
Miller, G.L.:
\newblock Finding small simple cycle separators for 2-connected planar graphs.
\newblock J.\ Comput.\ Syst.\ Sci.\ 32, 265--279 (1986)

\bibitem{MN95}
Miller, G.L., Naor, J.:
\newblock Flow in planar graphs with multiple sources and sinks.
\newblock SIAM J.\ Comput.\ 24, 1002--1017 (1995)

\bibitem{MW10}
Mozes, S., Wulff-Nilsen, C.:
\newblock Shortest paths in planar graphs with real lengths in $O(n\log^2n/\log\log n)$ time.
\newblock In: Algorithms - ESA 2010, 18th Annual European Symposium. Lecture Notes in Computer Science, vol.\ 6347, pp.\ 206--217 (2010)

\bibitem{STC09}
Schmidt, F.R., Toppe, E., Cremers, D.:
\newblock Efficient planar graph cuts with applications in computer vision.
\newblock In: IEEE Conference on Computer Vision and Pattern Recognition, pp.~351-356 (2009)

\end{thebibliography}
\end{document}